\documentclass[parindent=0pt]{article}
\usepackage{booktabs}
\usepackage{makecell}
\usepackage{subcaption}

\usepackage[utf8]{inputenc}     % for éô
\usepackage[english]{babel}     % for proper word breaking at line ends
\usepackage[a4paper, left=1in, right=1in, top=1.25in, bottom=1.25in]{geometry}
\usepackage{graphicx}           % for ?

\usepackage{amsmath,amssymb}    % for better equations
\usepackage{amsthm}             % for better theorem styles
\usepackage{mathtools}          % for greek math symbol formatting
\usepackage{enumitem}           % for control of 'enumerate' numbering
\usepackage{listings}           % for control of 'itemize' spacing
\usepackage{todonotes}          % for clear TODO notes
\usepackage{hyperref}           % page numbers and '\ref's become clickable
\usepackage{float}
\usepackage{amsthm}
\usepackage{amsmath}
\usepackage{amscd}
\usepackage{amssymb}
\usepackage{mathrsfs}
\usepackage{pdfpages}
\usepackage{setspace}
\usepackage{physics}
\usepackage{tcolorbox}

\usepackage{xcolor}
\usepackage{tikz}
\usepackage[toc,page,title,titletoc,header]{appendix}
 \usepackage{url}
 \usepackage{babel} 
\usepackage{csquotes}
\usepackage[f]{esvect}
\usepackage{blindtext}

\newcommand{\E}{\mathbb{E}}
\newcommand{\C}{\mathbb{C}}

\newcommand{\supp}{\mbox{supp}}

\newcommand{\g}{\mathfrak{g}}

\newcommand{\diag}{\mbox{diag}}

\newcommand{\oo}{\mathfrak{o}}

\let\phi=\varphi
\let\epsilon=\varepsilon

\newtheorem{theorem}{Theorem}

\newtheorem{proposition}[theorem]{Proposition}
\newtheorem{definition}[theorem]{Definition}
\newtheorem{conjecture}[theorem]{Conjecture}
\newtheorem{result}{Result}

\usepackage{authblk} % Pour gérer les affiliations multiples

\newcommand{\adots}{\mathinner{\mkern1mu \raise1pt
\hbox{.} \mkern2mu \raise4pt \hbox{.} \mkern2mu \raise7pt
\vbox{\kern7pt \hbox{.}} \mkern1mu}}
\def\[{[\![}
\def\]{]\!]}

\newlist{regimes}{enumerate}{1}
\setlist[regimes]{label={\textbullet\ Regime (\arabic*):}, leftmargin=*}

\title{%
   Spectral boundaries of deterministic matrices deformed by rotationally invariant random non-Hermitian ensembles
}

  \author[1,2]{Pierre Bousseyroux\thanks{Email: pierre.bousseyroux@polytechnique.edu}}
\author[3]{Marc Potters}

\affil[1]{Econophysics Lab, Institut Louis Bachelier, 28 Pl. de la Bourse, Palais Brongniart, 75002 Paris, France}
\affil[2]{LadHyX, UMR CNRS 7646, Ecole Polytechnique, Institut Polytechnique de Paris, 91128 Palaiseau, France}
\affil[3]{Capital Fund Management, Paris, France}

\newenvironment{remarks}{
  \par\vspace{1ex}
  \noindent\textbf{Remarks.}\begin{itemize}\setlength\itemsep{0pt}}
  {\end{itemize}\par\vspace{1ex}}

\makeatletter
\renewcommand{\@fnsymbol}[1]{%
  \ifcase#1\or
    \ensuremath{\dagger}\or
    \ensuremath{\ddagger}\or
    \ensuremath{\mathsection}\or
    \ensuremath{\mathparagraph}\or
    \ensuremath{\|}\or
    \ensuremath{\dagger\dagger}\or
    \ensuremath{\ddagger\ddagger}\or
    \ensuremath{\mathsection\mathsection}\or
    \ensuremath{\mathparagraph\mathparagraph}\else
    \@ctrerr
  \fi}
\makeatother

\begin{document}
\maketitle
\begin{abstract}
One of the great miracles of random matrix theory is that, in the $N \to \infty$ limit, many otherwise intractable matrix problems with horrendously complicated finite-$N$ expressions admit remarkably simple and elegant asymptotic solutions. In this paper, we illustrate this phenomenon in the context of \emph{spectral boundaries} (or spectral edges) for deformed random matrices. Specifically, we consider matrices of the form $\vb{A} + \vb{B}$, where $\vb{A}$ is a deterministic $N\times N$ matrix (not necessarily Hermitian) and $\vb{B}$ is a rotationally invariant random matrix. In the large-$N$ limit, we show that the complex eigenvalue distribution of $\vb{A} + \vb{B}$ satisfies remarkably simple boundary equations that depend on the $\mathcal{R}_1$ and $\mathcal{R}_2$ transforms of $\vb{B}$ defined in~\cite{bousseyroux1}. We illustrate our results on several explicit random matrix ensembles and support them with numerical simulations.
\end{abstract}

The limiting spectral distribution of $\vb{A}+\vb{B}$ is known only in a few special cases. When both $\vb{A}$ and $\vb{B}$ are Hermitian, the eigenvalues of $\vb{A}+\vb{B}$ are real, and their distribution can be obtained through Voiculescu’s additive free convolution~\cite{Voiculescu1986addition}. In this setting, the classical $R$-transform provides explicit formulas for the limiting eigenvalue density and its support. In contrast, for non-Hermitian deformations $\vb{A}+\vb{B}$, much less was known until relatively recently. The study of sums of non-Hermitian random matrices has a rich history, and several notions of $R$-transforms have been introduced to deal with such problems~\cite{JanikEtAlPRE1997, janik1997non, FeinbergZeeNPB501, FeinbergZeeNPB504, ChalkerWangPRL1997, Rodgers2010}. In the present work, we rely instead on the framework summarized and developed in~\cite{bousseyroux1}.

From a mathematical standpoint, the appropriate object for non-normal spectra is the \emph{Brown measure}, introduced by L.~G.~Brown and extended—also to certain unbounded operators—by Haagerup and Schultz~\cite{brown1986lidskii,haagerup2007brown}. Explicit descriptions of the Brown measure (or at least of its support and spectral edges) for deformations $\vb{A}+\vb{B}$ are available only in a few families, organized chiefly by the law of $\vb{B}$: (i) when $\vb{B}$ is Ginibre (i.i.d.\ complex entries with variance $1/N$), several works analyze the Brown measure and outliers of $\vb{A}+\vb{B}$ beyond the circular-law baseline~\cite{JanikEtAlPRE1997,janik1997non,FeinbergZeeNPB501,FeinbergZeeNPB504,ChalkerWangPRL1997,Rodgers2010, sarapin2025limiting, liu2022phase, fyodorov2025kac}; (ii) when $\vb{B}$ is elliptic in the sense of Girko \cite{girko1986elliptic}, deformations $\vb{A}+\vb{B}$ admit tractable descriptions of their limiting support and edges~\cite{Zhong2021EllipticPlusFree,Ho2022SelfAdjointPlusElliptic,HallHo2022ImagSemicirc}; and (iii) when $\vb{B}$ is bi-invariant \footnote{We say that a random matrix $\vb{M}$ is bi-invariant if $\vb{M}$ and $\vb{U}\,\vb{M}\,\vb{V}$ have the same distribution for all unitary matrices $\vb{U}$ and $\vb{V}$.}, the Brown measure of $\vb{A}+\vb{B}$ is also well understood~\cite{haagerup2000brown,bercovici2022brown,HoYinZhong2025OutliersFullRankSingleRing,HoZhong2025DeformedSingleRing}. Despite notable examples—see also the illustrative, problem-specific constructions of Biane and Lehner~\cite{BianeLehner2001}—general closed-form formulas for the full Brown measure of $\vb{A}+\vb{B}$ remain scarce; this motivates our focus on spectral boundaries.

In all these settings, explicit formulas for the full Brown measure often become prohibitively complicated. In this paper, we focus instead on the \emph{spectral boundaries} themselves.
We assume that the random matrices considered in this work admit a continuous spectral distribution, denoted by $\rho$, in the complex plane. This distribution is defined as the limit of the empirical spectral distribution of the complex eigenvalues in the high-dimensional limit. We denote by $\supp(\rho)$ the support of $\rho$, that is, the set of points $z \in \C$ such that $\rho(z) \neq 0$, and by $\partial \supp(\rho)$ its boundary, which precisely designates the spectral edges.
Beyond their theoretical interest, spectral boundaries play a key role in stability analysis, as the location of spectral edges often signals transitions such as the onset of chaos or loss of stability. Our goal is to develop a general theory for computing the support and its boundaries for deformations $\vb{A}+\vb{B}$ in the non-Hermitian setting, assuming that $\vb{B}$ is a rotationally invariant matrix\footnote{In the sense that $\vb{M}$ and $\vb{U}\vb{M}\vb{U}^*$ have the same distribution for every unitary matrix $\vb{U}$.}.
We provide a unifying framework to describe and compute the spectral edges of $\vb{A}+\vb{B}$ in full generality. Note that the case where $\vb{B}$ is a finite-rank normal operator has been studied in \cite{bousseyroux2}\footnote{An operator $\vb{B}$ is said to be \emph{normal} if it commutes with its adjoint, i.e., $\vb{B}\vb{B}^* = \vb{B}^*\vb{B}$, where $\vb{B}^* := \overline{\vb{B}}^{T}$ denotes the conjugate transpose of $\vb{B}$.}.

We will make use of the notions and notations introduced in~\cite{bousseyroux1}, to which we refer the reader, since the proofs presented here rely extensively on that reference. We introduce the following notations. Let $\vb{M}$ denote a large random matrix. We define
\begin{equation}\label{defh}
h_{\vb{M}}(z) = \tau\!\left([(z\vb{1} - \vb{M})(z\vb{1} - \vb{M})^*]^{-1}\right),
\end{equation}
and
\begin{equation}
\g_{\vb{M}}(z) = \tau\!\left((z\vb{1} - \vb{M})^{-1}\right),
\end{equation}
where $\vb{1}$ denotes the identity matrix and $\tau = \lim_{N \to +\infty} \frac{\tr}{N}$ is the normalized trace. We will also make use of the $\mathcal{R}_1$- and $\mathcal{R}_2$-transforms of the matrix $\vb{B}$, as well as their associated multivalued functions. Their definitions, given in \cite{bousseyroux1}, are recalled in Section~\ref{model_matrices}, where we compute them for several classes of matrices.

\section{Main results}

We can now state the main result which is proved in Appendix \ref{first_result}.

\begin{result}\label{main_result}
Let $\vb{A}$ be a large deterministic matrix and $\vb{B}$ a rotationally invariant random matrix.  
Then, the spectral boundaries of $\vb{A} + \vb{B}$ can be written as follows:

\begin{itemize}
    \item \textbf{boundary of type 1:} As the image of the points $x \in \C$ satisfying
    \begin{equation}\label{sol1}
        \partial_{\alpha} \mathcal{R}_{1, \vb{B}}(0, \g_{\vb{A}}(x)) = \frac{1}{h_{\vb{A}}(x)},
    \end{equation}
    where $\mathcal{R}_1 \in \widetilde{\mathcal{R}}_{1, {\vb{B}}}$, under the map
    \begin{equation}
        x \mapsto x + \mathcal{R}_{2, \vb{B}}(0, \g_{\vb{A}}(x)).
    \end{equation}

    \item \textbf{boundary of type 2:} By the points $z \in \C$ satisfying
    \begin{equation}\label{edge2}
        \tau(\vb{A}\vb{A}^*) - |\tau(\vb{A})|^2 = \frac{1}{h_{\vb{B}}(z - \tau(\vb{A}))}.
    \end{equation}
\end{itemize}

\end{result}

\begin{remarks}
\item There are two types of spectral edges, and we note that the second one, described by equation~\eqref{edge2}, provides a rather universal spectral edge, which depends on the matrix $\vb{A}$ only through the two quantities $\tau(\vb{A})$ and $\tau(\vb{A}\vb{A}^*)$.

\item Be careful with the interpretation of the result: we are not claiming that the curves given by the result will necessarily be spectral edges, but rather that the spectral edges of $\vb{A} + \vb{B}$ will be described in this way.

\item In the case where $\vb{B}$ is a bi-invariant matrix, using the single-ring theorem~\cite{FeinbergZeeNPB501, guionnet2011single}, we know that the support of the limiting spectral distribution of $\vb{B}$ is a ring, whose inner and outer radii are denoted by $r_{-,\vb{B}}$ and $r_{+,\vb{B}}$, respectively. The possible spectral edges are then given by
\begin{equation}\label{bi-invariant-case}
    r_{+,\vb{B}}^2 = \frac{1}{h_{\vb{A}}(z)},
\end{equation}
and by a circle centered at $\tau(\vb{A})$ with radius $\sqrt{r_{-, \vb{B}}^2 - \tau(\vb{A}\vb{A}^*)}$, whenever this quantity is well defined. Note that these spectral edges are highly universal with respect to $\vb{B}$ and depend only on $r_{+,\vb{B}}$ and $r_{-,\vb{B}}$. Example~\ref{sub:lemniscate} in Section~\ref{examples} will illustrate this property.

\item When $\vb{B}$ is bi-invariant and $\vb{A}$ is Hermitian, the previous remark implies that the spectral boundary is given by the set of complex numbers $z \in \C$ satisfying
\begin{equation}\label{formula_bi}
    r_{+,\vb{B}}^2\, \Im\bigl(\g_{\vb{A}}(z)\bigr) = -\,\Im(z).
\end{equation}

Let us now examine the limit as $r_{+,\vb{B}} \to 0$.  
Denote by $\rho_{\vb{A}}$ the limiting spectral density of the real eigenvalues of $\vb{A}$, supported on the compact interval $[\lambda_{-}, \lambda_{+}]$.  
In this regime, one finds
\begin{equation}\label{traceMP}
    |\Im(z)| \underset{r_{+,\vb{B}}\to 0}{\sim} r_{+,\vb{B}}^2\,\pi\,\rho_{\vb{A}}(\Re(z)).
\end{equation}
If $\Re(z) - \lambda_{+} \gg \Im(z)$, then
\begin{equation}
    \Re(z) - \lambda_{+} \underset{r_{+,\vb{B}}\to 0}{\sim} \frac{r_{+,\vb{B}}^4\,\pi^2\,C^2}{2},
\end{equation}
provided that
\begin{equation}
    \rho_{\vb{A}}(\lambda) \underset{\lambda \to \lambda_{+}}{\sim} C\,\sqrt{\lambda_{+} - \lambda},
    \qquad C>0.
\end{equation}

The denser the eigenvalues are near the edge, the more rapidly the spectrum is pushed away from the real axis.  
We observe that the imaginary part of the boundary departs from the real axis at order $r_{+,\vb{B}}^2$, whereas the real part shifts at order $r_{+,\vb{B}}^4$. We illustrate the content of this remark in Example~\ref{sub:airplane}, presented in Sec.~\ref{examples}.

\end{remarks}

Note that the previous result does not satisfactorily address the case $\vb{A} = 0$.  The next result analyzes this situation in detail and the proof is given in Appendix \ref{second_result}.

\begin{result}\label{thm:bordspectre}
Let $\vb{B}$ be a large random matrix a priori non-Hermitian. The spectral boundaries of $\vb{B}$ can be written as follows:

\begin{itemize}
    \item \textbf{boundary of type 1:} the image of the points $x$ satisfying
    \begin{equation}\label{boundary1}
        \partial_{\alpha}\mathcal{R}_{1, \vb{B}}(0, 1/x) = |x|^2,
    \end{equation}
    under the map
    \begin{equation}\label{map}
        x \longmapsto x + \mathcal{R}_{2, \vb{B}}(0, 1/x),
    \end{equation}
    where $\mathcal{R}_{1, \vb{B}}$ and $\mathcal{R}_{2, \vb{B}}$ are suitable determinations of $\widetilde{\mathcal{R}}_{1, \vb{B}}$ and $\widetilde{\mathcal{R}}_{2, \vb{B}}$.

    \item \textbf{boundary of type 2:} the points $z \in \mathbb{C}$ such that
    \begin{equation}\label{boundary2}
        \mathcal{R}_{1, \vb{B}}(\alpha, - (z - \tau(\vb{B})) \alpha^2)
        \underset{\alpha\to 0}{=}
        \frac{-1}{\alpha} - |z - \tau(\vb{B})|^2 \alpha + o(\alpha),
    \end{equation}
    for a certain determination of $\mathcal{R}_{1, \vb{B}}$.
\end{itemize}
\end{result}

\begin{remarks}
\item Once again, we are not claiming that the curves given by the result will necessarily be spectral edges of $\vb{B}$, but rather that the spectral edges will be described in this way. 
 \item In the case where $\vb{B}$ is bi-invariant, we deduce that there exists a first branch of $\mathcal{R}_{1, \vb{B}}$ such that
\begin{equation}
    \mathcal{R}_{1, \vb{B}}(\alpha) \underset{\alpha\to 0}{\sim} r_{+, \vb{B}}^2 \,\alpha,
\end{equation}
and a second one (in the case where the ring is non-degenerate) such that
\begin{equation}
    \mathcal{R}_{1, \vb{B}}(\alpha) \underset{\alpha\to 0}{=} -\frac{1}{\alpha} - r_{-, \vb{B}}^2 \,\alpha.
\end{equation}

\item As explained in~\cite{bousseyroux1}, $\widetilde{\mathcal{R}}_1$ is additive for rotationally invariant matrices. In particular, if we take two bi-invariant matrices $\vb{A}$ and $\vb{B}$, then applying the previous statement together with the preceding item yields the addition rules for the inner and outer radii of $\vb{A} + \vb{B}$.

Let $\vb{A}$ and $\vb{B}$ be bi-invariant matrices. Then
\begin{equation}
    r_{+,\,\vb{A} + \vb{B}}^2 = r_{+,\,\vb{A}}^2 + r_{+,\,\vb{B}}^2,
\end{equation}
and
\begin{equation}
    r_{-,\,\vb{A} + \vb{B}}^2 = \max\!\Bigl( r_{-,\,\vb{A}}^2 - r_{+,\,\vb{B}}^2,\; r_{-,\,\vb{B}}^2 - r_{+,\,\vb{A}}^2,\; 0 \Bigr).
\end{equation}

\item Using Result~4 of~\cite{bousseyroux1}, the boundary of type~\(2\) describes the boundary of a charge-free region where the field \(\g_{\vb{B}}(z)\) vanishes. We have chosen to express the theorem in terms of \(\mathcal{R}_{1,\vb{B}}\), but one may also express it in terms of \(\vb{B}^{-1}\). Studying a spectral boundary of \(\vb{B}\) for which the field \(\g_{\vb{B}}(z)\) vanishes in the charge-free region amounts to studying a spectral boundary of \(\vb{B}^{-1}\) for which the field \(\g_{\vb{B}^{-1}}(z)\) is equal, in a neighborhood of this boundary, to \(1/z\). Thus, by Result~4 of~\cite{bousseyroux1}, this corresponds to a region where \(\mathcal{R}_2 = 0\). One may therefore apply equation~\eqref{boundary1} to \(\vb{B}^{-1}\) to obtain that the boundaries of type~\(2\) of \(\vb{B}\), given by equation~\eqref{boundary2}, can equivalently be described by the points \(z\in\mathbb{C}\) satisfying
\begin{equation}
    \partial_{\alpha}\mathcal{R}_{1,\vb{B}^{-1}}(0,z)
    =
    \frac{1}{|z|^2},
\end{equation}
where \(\mathcal{R}_{1,\vb{B}^{-1}}\) is an appropriate branch.

\end{remarks}

Section~\ref{model_matrices} introduces several random matrix ensembles and derives the quantities needed to apply Results~\ref{main_result} and~\ref{thm:bordspectre}. Section~\ref{examples} combines these ensembles to construct explicit examples and provides numerical illustrations of our results.

\section{Model random matrix ensembles}\label{model_matrices}

We now introduce the basic random matrix ensembles that will be used throughout the paper to test and illustrate our results.

\begin{definition}[Complex Ginibre ensemble]\label{def:ginibre}
The matrix $\vb{G}$ denotes an $N\times N$ complex Ginibre matrix, i.e.\ a non-Hermitian random matrix with i.i.d.\ complex Gaussian entries of zero mean and variance $1/N$ \cite{girko1985circular}.
\end{definition}

\begin{definition}[Elliptic Ginibre ensemble]\label{def:elliptic_ginibre}
The elliptic Ginibre ensemble, denoted $\vb{E}_\tau$, consists of $N\times N$ complex random matrices with i.i.d.\ Gaussian entries of zero mean and correlations
\begin{equation}
\E[|X_{ij}|^2] = \frac{1}{N},
\qquad
\E[X_{ij}\overline{X_{ji}}] = \frac{\tau}{N},
\end{equation}
for $1\le i\neq j\le N$, with $\tau\in[-1,1]$ \cite{sommers1988spectrum,girko1986elliptic}.
\end{definition}

\begin{definition}[Haar unitary ensemble]\label{def:unitary}
The matrix $\vb{U}$ denotes an $N\times N$ random unitary matrix distributed according to the Haar measure on the unitary group $\mathrm{U}(N)$ \cite{mehta2004random}.
\end{definition}

\begin{definition}[Complex Wishart ensemble]\label{def:complex_wishart}
Let $\vb{A}$ be an $N\times T$ matrix with i.i.d.\ complex Gaussian entries of zero mean and unit variance. The matrix
\begin{equation}
\vb{W}_q = \frac{1}{T}\vb{A}\vb{A}^*
\end{equation}
is called a complex Wishart matrix with aspect ratio $q=N/T$ \cite{wishart1928generalised}.
\end{definition}

\begin{definition}[Two--ring invariant ensemble]\label{def:two_ring}
Let $\vb{D}$ be a $N\times N$ diagonal matrix whose entries are i.i.d. and uniformly distributed on the union of the two circles $\{|z|=r_1\}\cup\{|z|=r_2\}$ where $0\leq r_1<r_2$. We define
\begin{equation}
\vb{C}_{r_1,r_2} = \vb{U}\vb{D}\vb{U}^*,
\end{equation}
where $\vb{U}$ is Haar-distributed on $\mathrm{U}(N)$.
\end{definition}

We now compute the quantities required to apply the results proved in this paper. To this end, we adopt the notation and definitions introduced in~\cite{bousseyroux1}, in particular the transforms $\mathcal{H}$, $\tilde{\mathcal{R}}_1$, and $\tilde{\mathcal{R}}_2$, whose definitions we briefly recall below.

Let $\mathbf{M}$ be a random matrix of size $N\times N$, and let $\psi_1,\psi_2 \in \mathbb{C}^N$. We define the functional $\mathcal{H}_{\mathbf{M}}^N$ by
\begin{equation}\label{def:H}
\frac{1}{2N}\,
\log\!\Bigl\{
\E_{\mathbf{M}}\!\left[
\exp\!\Bigl(2N\,\Re\langle \psi_1,\, \mathbf{M} \psi_2\rangle\Bigr)
\right]
\Bigr\}
\;=\;
\mathcal{H}_{\mathbf{M}}^N\!\bigl(
\|\psi_1\|\,\|\psi_2\|,\,
\langle\psi_1,\psi_2\rangle,\,
\overline{\langle\psi_1,\psi_2\rangle}
\bigr),
\end{equation}\footnote{Here $\langle \cdot , \cdot \rangle$ denotes the standard inner product on $\C^N$,
$\langle x,y\rangle = \sum_{i=1}^N \overline{x_i}\,y_i$.}
We then set
\begin{equation}
\mathcal{H}_{\mathbf{M}} \;=\; \lim_{N\to\infty} \mathcal{H}_{\mathbf{M}}^N.
\end{equation}
Furthermore, we define the derivatives
\begin{equation}\label{R_transforms_def}
\mathcal{R}_1(\alpha,\beta) \;:=\; \partial_{\alpha}\,\mathcal{H}(\alpha,\beta),
\qquad
\mathcal{R}_2(\alpha,\beta) \;:=\; 2\partial_{\beta}\,\mathcal{H}(\alpha,\beta)\footnote{%
The derivative $\partial_{\beta}$ denotes the Wirtinger derivative, defined as
$\displaystyle \partial_{\beta} = \tfrac{1}{2}\bigl(\partial_{\Re(\beta)} - i\,\partial_{\Im(\beta)}\bigr)$.
}
\end{equation}
We also denote by $\widetilde{\mathcal{R}}_1$ and $\widetilde{\mathcal{R}}_2$ the multivalued analytic functions 
that collect all possible determinations (branches) of $\mathcal{R}_1$ and $\mathcal{R}_2$, respectively. We say that the functions $\mathcal{R}_1$ and $\mathcal{R}_2$ defined by Eqs.~\eqref{R_transforms_def} are the principal branches of $\widetilde{\mathcal{R}}_1$ and $\widetilde{\mathcal{R}}_2$.

Table~\ref{tab} summarizes these quantities for the random matrix ensembles described above. The elliptic Ginibre case has already been studied in~\cite{bousseyroux1}. The Ginibre ensemble corresponds simply to the elliptic case with $\tau=0$. For the Wishart ensemble $\vb{W}_q$, the computation of $\mathcal{H}_{\vb{W}_q}$ reduces to a Gaussian integral. In the unitary case, since the ensemble is bi-invariant, formula~\eqref{def:H} yields
\begin{equation}
    \mathcal{H}^{N}_{\vb{U}}(\alpha)
    =
    \frac{1}{2N}
    \log\!\Biggl(
    \E_{\vb{U}}
    \Bigl[
    \exp\!\bigl(2N\,\Re\!\bigl(\alpha [\vb{U}]_{11}\bigr)\bigr)
    \Bigr]
    \Biggr).
\end{equation}
Here, $[\vb{U}]_{11}$ denotes the first entry of the first column of $\vb{U}$. Since $[\vb{U}]_{11}$ follows a beta distribution, a saddle-point analysis leads to the expression reported in Table~\ref{tab}. The corresponding expressions for $\mathcal{R}_1$ and $\mathcal{R}_2$ are then obtained by differentiation. The presence of the square root $\sqrt{\cdot}$ implies the existence of two determinations, which is accounted for by the $\pm$ sign. The case of the two--ring ensemble $\vb{C}_{r_1,r_2}$ requires a completely different approach. Instead of computing $\mathcal{H}$ via the spherical integral, we rely on Result~4 of~\cite{bousseyroux1} to obtain the relevant determinations of $\tilde{\mathcal{R}}_1$ and $\tilde{\mathcal{R}}_2$ through the computation of the functions $\g(z)$ and $h(z)$, depending on whether $|z|>r_2$ or $r_1<|z|<r_2$. We emphasize that this method does not guarantee that all possible determinations are obtained. This example illustrates that knowing a single determination of $\partial_{\alpha}\tilde{\mathcal{R}}_1(0,\beta)$ or $\tilde{\mathcal{R}}_2(0,\beta)$ does not allow one to infer the others in a straightforward manner. Notice that the ensembles $\vb{G}$, $\vb{W}_q$, and $\vb{U}$ are bi-invariant, which implies $\tilde{\mathcal{R}}_2=0$.
\begin{table}[h!]
\centering
\scriptsize
\setlength{\tabcolsep}{3pt}
\renewcommand{\arraystretch}{2.5}

\begin{tabular}{|c|c|c|c|c|c|}
\hline
 & $\vb{G}$ & $\vb{E}_\tau$ & $\vb{U}$ & $\vb{W}_q$ & $\vb{C}_{r_1,r_2}$ \\
\hline

$\mathcal{H}$
& $\dfrac{\alpha^2}{2}$
& $\dfrac{\alpha^2}{2}+\dfrac{\tau}{4}(\beta^2+\overline{\beta}^2)$
& $\dfrac{-1+\sqrt{1+4\alpha^2}}{2}
+\dfrac12\ln\!\Bigl(\dfrac{-1+\sqrt{1+4\alpha^2}}{2\alpha^2}\Bigr)$
& $\displaystyle \frac{1}{2q}\ln\!\left(\frac{1}{|1-q\beta|^2-q^2\alpha^2}\right)$
& unknown
\\
\hline

$\tilde{\mathcal{R}}_1(\alpha, \beta)$
& $\alpha$
& $\alpha$
& $\displaystyle \frac{-1\pm\sqrt{1+4\alpha^2}}{2\alpha}$
& $\displaystyle \frac{q\alpha}{|1-q\beta|^2-q^2\alpha^2}$
& unknown
\\
\hline

$\tilde{\mathcal{R}}_2(\alpha, \beta)$
& $0$
& $\tau\,\beta$
& $0$
& $\displaystyle \frac{1-q\overline{\beta}}{|1-q\beta|^2-q^2\alpha^2}$
& unknown
\\
\hline

$\partial_{\alpha}\tilde{\mathcal{R}}_1(0,\beta)$
& $1$
& $1$
& $1$
& $\frac{q}{|1 - q\beta|^2}$
& \begin{tabular}[c]{@{}c@{}}
$\displaystyle \frac{r^2-r_1^2r_2^2|\beta|^2}{1-r^2|\beta|^2}, $ \\[0.5ex]
$\displaystyle
\frac{(4|\beta|^2r_1^2 - 1)(4|\beta|^2r_2^2 - 1)}{8 |\beta|^4 (r_1^2 - r_2^2)} - 1$
$\displaystyle, ...$ \\[0.5ex]
\end{tabular}
\\
\hline

$\tilde{\mathcal{R}}_2(0,\beta)$
& $0$
& $\tau\,\beta$
& $0$
& $\displaystyle \frac{1}{1-q\beta}$
& \begin{tabular}[c]{@{}c@{}}
$0, $ \\[0.5ex]
$\displaystyle -\frac{1}{2\beta}$
$\displaystyle, ...$ \\[0.5ex]
\end{tabular}
\\
\hline

\end{tabular}
\caption{Important quantities associated with the ensembles $\vb{G}$, $\vb{E}_\tau$, $\vb{U}$, $\vb{W}_q$, and $\vb{C}_{r_1,r_2}$. The parameter $r$ is defined by $r^2 = \frac{r_1^2 + r_2^2}{2}$ and we have assumed that $r_1<r_2$.}
\label{tab}
\end{table}

\section{Numerical examples}\label{examples}

\subsection{Lemniscate and bow tie}\label{sub:lemniscate}

\begin{proposition}\label{thm:lemniscate}
Let $\vb{B}$ be a bi-invariant random matrix whose inner and outer radii are denoted by $r_{-,\vb{B}}$ and $r_{+,\vb{B}}$, respectively, and let $\vb{D} = \diag(1, \ldots, 1, -1, \ldots, -1)$ be a diagonal matrix with half of the entries equal to $1$ and the other half equal to $-1$. Then, the outer boundary of the spectrum is given by the set of complex numbers $z \in \C$ such that
\begin{equation}\label{outer}
     |z^2 - 1|^2 = r_{+, \vb{B}}^2 (|z|^2 + 1).
\end{equation}
When this makes sense, the circle of radius $\sqrt{r_{-, \vb{B}}^2 - 1}$ is the inner spectral boundary.
\end{proposition}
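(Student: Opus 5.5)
We specialise the bi-invariant remark following Result~\ref{main_result} to $\vb{A}=\vb{D}$. There the possible spectral edges of $\vb{A}+\vb{B}$ for bi-invariant $\vb{B}$ are given by two equations. The first is the type~1 equation \eqref{bi-invariant-case}, $r_{+,\vb{B}}^2 = 1/h_{\vb{A}}(z)$. The second is the circle centered at $\tau(\vb{A})$ of radius $\sqrt{r_{-,\vb{B}}^2-\tau(\vb{A}\vb{A}^*)}$. Both only require the elementary spectral data of $\vb{D}$, so the first step is to compute $h_{\vb{D}}$, $\tau(\vb{D})$ and $\tau(\vb{D}\vb{D}^*)$.

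Since $\vb{D}$ is diagonal and Hermitian, $(z\vb{1}-\vb{D})(z\vb{1}-\vb{D})^*$ is diagonal with entries $|z-1|^2$ and $|z+1|^2$, each with weight $1/2$. Hence
\begin{equation*}
h_{\vb{D}}(z)=\frac12\left(\frac{1}{|z-1|^2}+\frac{1}{|z+1|^2}\right)=\frac{|z+1|^2+|z-1|^2}{2|z^2-1|^2}=\frac{|z|^2+1}{|z^2-1|^2},
\end{equation*}
using the parallelogram identity $|z+1|^2+|z-1|^2=2(|z|^2+1)$. Substituting into \eqref{bi-invariant-case} gives exactly \eqref{outer}. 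Similarly $\tau(\vb{D})=0$ and $\tau(\vb{D}\vb{D}^*)=1$, so the type~2 edge is the circle $|z|=\sqrt{r_{-,\vb{B}}^2-1}$ centered at the origin, defined whenever $r_{-,\vb{B}}>1$. This yields both candidate curves.

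The remaining, and main, step is to identify which candidate is the outer boundary and which the inner one, since Result~\ref{main_result} only says that edges lie among these curves. I would argue as follows.

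\textbf{Outer boundary.} For $|z|\to\infty$ we have $1/h_{\vb{D}}(z)\sim|z|^2\to\infty$, while near $z=\pm1$ it tends to $0$. So the region $\{|z^2-1|^2\le r_{+,\vb{B}}^2(|z|^2+1)\}$ is bounded, contains $\pm1$, and its boundary is the outermost curve. For large $|z|$ the resolvent-type quantity $h_{\vb{A}+\vb{B}}$ behaves as outside the spectrum. Continuity in $r_{+,\vb{B}}$ then identifies this curve as the outer edge. This argument starts from small $r_{+,\vb{B}}$, where the spectrum concentrates near $\pm1$, and also covers the bow-tie/lemniscate topology change at $r_{+,\vb{B}}^2=1$.

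\textbf{Inner boundary.} The circle lies inside this region: the outer curve meets the real axis at points with $|z|\ge$ roughly $r_{+,\vb{B}}$, and $r_{-,\vb{B}}^2-1<r_{+,\vb{B}}^2$. It is the edge of a hole in which $\g_{\vb{A}+\vb{B}}=0$, inherited from the hole of $\vb{B}$, consistent with the type~2 interpretation in the remarks after Result~\ref{thm:bordspectre}.

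I expect a fully rigorous version of this selection argument to be the delicate point. If needed, I would settle it by the heuristic continuity argument together with numerical checks.
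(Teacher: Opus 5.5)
Your proposal is correct and follows the paper's proof. The paper likewise computes $h_{\vb{D}}(z)=(|z|^2+1)/|z^2-1|^2$, substitutes it into \eqref{bi-invariant-case}, and reads off the circle from $\tau(\vb{D})=0$ and $\tau(\vb{D}\vb{D}^*)=1$. It stops there and does not attempt your heuristic outer/inner selection argument.

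In that extra argument, checking only the real axis does not show the circle lies inside the outer region. The binding direction is the imaginary axis, where the outer curve sits at $|z|^2=r_{+,\vb{B}}^2-1$. More generally, $|z^2-1|^2\le(|z|^2+1)^2$ shows the whole disk $|z|^2\le r_{+,\vb{B}}^2-1$ lies in the region. Since $r_{-,\vb{B}}\le r_{+,\vb{B}}$, the claim still holds.
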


\begin{remarks}
    \item The case $r_{+, \vb{B}} = 1$ yields the classical lemniscate curve given by
    \begin{equation}\label{eqlemniscate}
        |z^2 - 1|^2 = (|z|^2 + 1)^2.
    \end{equation} This result was originally obtained in \cite{BianeLehner2001}.
    \item Figure~\ref{fig:lemniscate} illustrates this result. The middle and right panels display the same outer boundary, highlighting the universal character of the spectral edges in the bi-invariant case, as emphasized in the second remark of Result~\ref{main_result}.
\end{remarks}

\begin{proof}
This follows directly from formula \eqref{bi-invariant-case}, since
\begin{equation}
    h_{\vb{D}}(z) = \frac{|z|^2 + 1}{|z^2 - 1|^2}.
\end{equation}
\end{proof}

\begin{figure}[h!]
    \centering
\includegraphics[width=\textwidth]{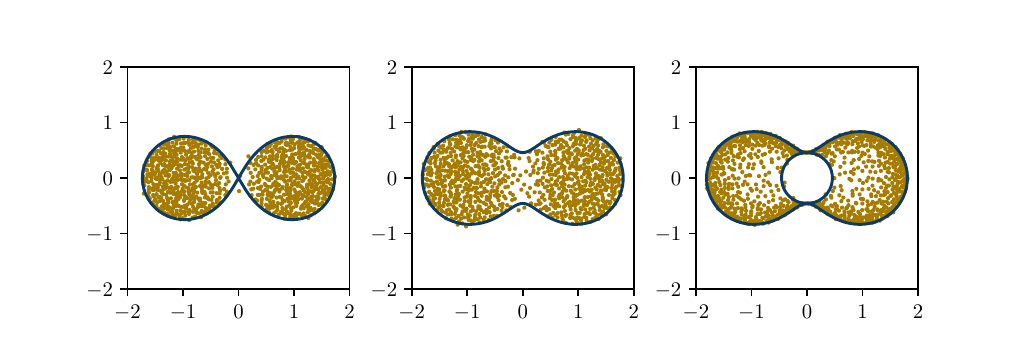}
    \caption{Empirical eigenvalues from single realizations of non-Hermitian deformations of $\vb{D}$.
Left: $\vb{D}+\vb{U}$ with $\vb{U}$ a random unitary matrix ($N=1000$); the lemniscate
\eqref{eqlemniscate} gives the theoretical spectral boundary.
Center: $\vb{D}+\sigma\vb{G}$ with $\sigma=1.1$ and $\vb{G}$ a complex Ginibre matrix
($N=1000$); the outer boundary is given by \eqref{outer} with $r_{+,\vb{B}}=\sigma$.
Right: $\vb{D}+\sigma\vb{U}$ with $\sigma=1.1$ and $\vb{U}$ a random unitary matrix
($N=1000$); besides the outer boundary \eqref{outer}, an inner spectral edge appears
at radius $\sqrt{\sigma^2-1}$.}
    \label{fig:lemniscate}
\end{figure}

\subsection{Airplane wing}\label{sub:airplane}

\begin{proposition}
Let $\vb{W}_q$ be a complex Wishart matrix defined in Definition \ref{def:complex_wishart} and let $\vb{B}$ be a large bi-invariant matrix such that $r_{-, \vb{B}} = 0$ and $r_{+, \vb{B}} = 1$. The spectral boundary of $\vb{M}$, defined by
\begin{equation}
    \vb{M} = \vb{W}_q + \sigma \vb{B},
\end{equation}
is given by the set of points $z \in \C$ satisfying
\begin{equation}\label{theo}
    \sigma^2 \Im\left(\frac{z + q - 1 \pm \sqrt{(z + q - 1)^2 - 4qz}}{2qz}\right) = - \Im(z).
\end{equation}
\end{proposition}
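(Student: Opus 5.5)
The plan is to apply formula~\eqref{formula_bi} from the remarks after Result~\ref{main_result}, with $\vb{A} = \vb{W}_q$ and the bi-invariant perturbation $\sigma\vb{B}$. Since $\vb{W}_q$ is Hermitian and independent of $\vb{B}$, and $\vb{B}$ is bi-invariant, we may condition on $\vb{W}_q$ and treat it as deterministic. We then check the two ingredients of that formula: the radii of $\sigma\vb{B}$, and the resolvent $\g_{\vb{W}_q}$.

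\textbf{Radii.} Scaling a bi-invariant matrix by $\sigma>0$ scales its ring by $\sigma$. Hence $r_{+,\sigma\vb{B}} = \sigma$ and $r_{-,\sigma\vb{B}} = 0$. The type-2 boundary of Result~\ref{main_result} is the circle of radius $\sqrt{r_{-}^2 - \tau(\vb{A}\vb{A}^*)}$. Here $r_-^2 - \tau(\vb{W}_q^2) < 0$, so this quantity is not well defined and no inner edge appears. Only the outer boundary $\sigma^2\,\Im(\g_{\vb{W}_q}(z)) = -\Im(z)$ survives. It is obtained from \eqref{bi-invariant-case} by writing $1/h_{\vb{A}}$ for Hermitian $\vb{A}$: the identity $\g(z)-\g(\bar z) = (\bar z - z)\,h(z)$ gives $h_{\vb{A}}(z) = -\Im \g_{\vb{A}}(z)/\Im z$.

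\textbf{Resolvent.} Next we insert the Marchenko--Pastur Stieltjes transform. It satisfies the quadratic $qz\,\g^2 - (z+q-1)\g + 1 = 0$, whose roots are
\[
\g_{\vb{W}_q}(z) = \frac{z+q-1 \pm \sqrt{(z+q-1)^2 - 4qz}}{2qz}.
\]
We expect the main obstacle to be the choice of branch. Off the real axis, only the determination with $\g(z)\sim 1/z$ at infinity and $\Im\g$ of sign opposite to $\Im z$ is the physical resolvent. The statement keeps $\pm$ because, as stressed after Result~\ref{main_result}, the result only lists candidate curves. I would argue that the physical branch gives the true boundary, since on it $-\Im\g/\Im z = h > 0$. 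The other branch yields spurious solutions, which can be discarded by this positivity requirement or by numerical comparison.

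\textbf{Degenerate case $q>1$.} The atom at $0$ of weight $1-1/q$ is already encoded in the same formula for $\g$. So the equation \eqref{theo} holds uniformly in $q$, and substituting $\g_{\vb{W}_q}$ into \eqref{formula_bi} completes the derivation.
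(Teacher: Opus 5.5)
Your proposal is correct and follows the paper's route: the paper's proof simply substitutes the Marchenko--Pastur Stieltjes transform $\g_{\vb{W}_q}$ into formula~\eqref{formula_bi} with $r_{+,\sigma\vb{B}}=\sigma$. Your extra points are consistent elaborations of steps the paper leaves implicit: there is no type-2 inner edge because $r_{-,\sigma\vb{B}}=0$, the identity $h_{\vb{A}}(z)=-\Im\g_{\vb{A}}(z)/\Im z$ holds for Hermitian $\vb{A}$, and you discuss the choice of branch.
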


\begin{remarks}
    \item This can be viewed as an ``unflattened'' version of the Marchenko--Pastur distribution introduced in \cite{marchenko1967distribution}. Figure~\ref{fig:aile_avion} illustrates the case where $\vb{A}$ is a complex Ginibre matrix. For small values of $\sigma$, using equation~\eqref{traceMP}, the spectral boundaries are described by the union of the curves $x \mapsto \sigma^2 \pi \rho(x)$ and $x \mapsto -\sigma^2 \pi \rho(x)$, where $\rho$ denotes the Marchenko--Pastur density. One also observes that the smallest and largest real parts of the eigenvalues remain close to the standard Marchenko--Pastur edges
\begin{equation}
    \lambda_- = (1 - \sqrt{q})^2
    \qquad \text{and} \qquad
    \lambda_+ = (1 + \sqrt{q})^2.
\end{equation}
This behavior is precisely described by the last remark of Result~\ref{main_result}.
\end{remarks}

\begin{proof}
It suffices to use equation \eqref{formula_bi} together with the expression of $\g_{\vb{W}_q}$ computed in \cite{potters2020first}:
\begin{equation}
    \g_{\vb{W}_q}(z) = \frac{z + q - 1 \pm \sqrt{(z + q - 1)^2 - 4qz}}{2qz}.
\end{equation}
\end{proof}

\begin{figure}[h!]
    \centering
    \includegraphics{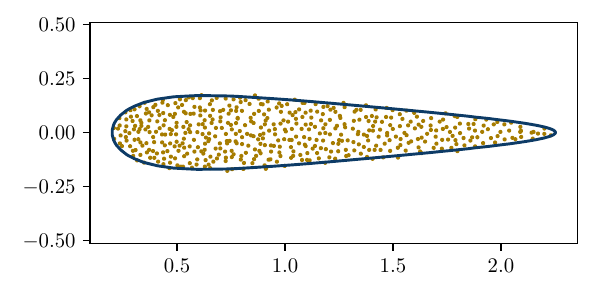}
    \caption{Complex eigenvalues of a matrix of size $N = 500$ given by $\vb{W}_q + \sigma \vb{G}$, where $\vb{W}_q$ is a complex Wishart matrix with parameter $q = 1/4$, $\sigma = 0.3$, and $\vb{G}$ is a complex Ginibre matrix. The theoretical spectral boundary is also shown using the equation \eqref{theo}.}
    \label{fig:aile_avion}
\end{figure}

\subsection{Samoussa}\label{sub:samoussa}

\begin{proposition}\label{propsamoussa}
    We consider two independent realizations of complex Wishart matrices $\vb{W}_1$ and $\vb{W}_2$ with parameter $q$ (see Definition~\ref{def:complex_wishart}), and we study the eigenvalues of the matrix
\begin{equation}
    \vb{M} = \vb{W}_1 + i \vb{W}_2.
\end{equation}

The spectral boundary is obtained as follows. We first define the curve $\mathcal{C}$ as the set of points $x$ satisfying
\begin{equation}
     \frac{q}{|x - q|^2} + \frac{q}{|x - iq|^2} = 1.
\end{equation}
We then consider the image of $\mathcal{C}$ under the map $\phi$ defined by
\begin{equation}
    \phi: x \mapsto x + \frac{x}{x - q} + \frac{ix}{x - iq}.
\end{equation}

\end{proposition}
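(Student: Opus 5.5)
The plan is to apply Result~\ref{thm:bordspectre} directly to $\vb{B}=\vb{W}_1+i\vb{W}_2$. Its two ingredients, $\partial_\alpha\widetilde{\mathcal{R}}_{1}(0,\cdot)$ and $\widetilde{\mathcal{R}}_{2}(0,\cdot)$, will be computed by additivity. As recalled in the remarks after Result~\ref{thm:bordspectre}, and as follows from the definition~\eqref{def:H} for independent matrices one of which is rotationally invariant, $\mathcal{H}_{\vb{W}_1+i\vb{W}_2}=\mathcal{H}_{\vb{W}_1}+\mathcal{H}_{i\vb{W}_2}$. Hence $\mathcal{R}_1$ and $\mathcal{R}_2$ are additive, at least on the principal branches.

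\textbf{Step 1: the transform of $i\vb{W}_2$.} Set $\psi_1'=-i\psi_1$. Then $\langle\psi_1,i\vb{W}\psi_2\rangle=\langle\psi_1',\vb{W}\psi_2\rangle$, $\|\psi_1'\|=\|\psi_1\|$ and $\langle\psi_1',\psi_2\rangle=i\langle\psi_1,\psi_2\rangle$. It follows from~\eqref{def:H} that
\begin{equation*}
\mathcal{H}_{i\vb{W}}(\alpha,\beta)=\mathcal{H}_{\vb{W}}(\alpha,i\beta).
\end{equation*}
Differentiating with the Wirtinger derivative, and using the entries of Table~\ref{tab} for $\vb{W}_q$, gives
\begin{equation*}
\partial_\alpha\mathcal{R}_{1,i\vb{W}}(0,\beta)=\frac{q}{|1-iq\beta|^2},\qquad
\mathcal{R}_{2,i\vb{W}}(0,\beta)=\frac{i}{1-iq\beta}.
\end{equation*}

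\textbf{Step 2: substitute $\beta=1/x$ into the type~1 boundary.} For $\vb{W}_1$ this gives $\partial_\alpha\mathcal{R}_1=q|x|^2/|x-q|^2$ and $\mathcal{R}_2=x/(x-q)$. For $i\vb{W}_2$ it gives $q|x|^2/|x-iq|^2$ and $ix/(x-iq)$. Adding the two contributions and inserting them into~\eqref{boundary1}, the condition becomes
\begin{equation*}
\frac{q|x|^2}{|x-q|^2}+\frac{q|x|^2}{|x-iq|^2}=|x|^2 .
\end{equation*}
After dividing by $|x|^2$ (with $x\neq0$), this is exactly the defining equation of $\mathcal{C}$. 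The map~\eqref{map} becomes $x\mapsto x+\frac{x}{x-q}+\frac{ix}{x-iq}=\phi(x)$. This recovers the type~1 boundary claimed in the statement, using the principal determinations, which are the ones coming from $\mathcal{H}$ in Table~\ref{tab}.

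\textbf{Step 3: rule out type~2 boundaries.} This is the main obstacle, together with justifying that the principal branches are the relevant ones. A type~2 edge requires a determination with $\mathcal{R}_1(\alpha,-(z-\tau(\vb{B}))\alpha^2)\sim-1/\alpha$, i.e.\ a charge-free hole in which $\g_{\vb{B}}$ vanishes (by the last remark after Result~\ref{thm:bordspectre}).

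For the Wishart transforms, $\mathcal{R}_1(\alpha,\beta)=q\alpha/(|1-q\beta|^2-q^2\alpha^2)$. Along $\beta=O(\alpha^2)$ this is $O(\alpha)$, and the same holds after the rotation $\beta\mapsto i\beta$. So their sum is regular at $\alpha=0$ and cannot have a $-1/\alpha$ singularity.

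I would complement this with a physical argument. $\vb{W}_1+i\vb{W}_2$ has eigenvalues in the closed first quadrant with $\tau(\vb{B})=1+i$, and no hole is expected to surround $\tau(\vb{B})$. I would check this against numerics, and conclude that the boundary consists solely of $\phi(\mathcal{C})$.
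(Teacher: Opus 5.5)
Your proposal is correct and takes the same route as the paper: additivity of $\mathcal{R}_1,\mathcal{R}_2$, the Wishart entries of Table~\ref{tab} (with $\mathcal{H}_{i\vb{W}}(\alpha,\beta)=\mathcal{H}_{\vb{W}}(\alpha,i\beta)$ for the rotated summand), and the type~1 procedure of Result~\ref{thm:bordspectre} at $\beta=1/x$. Your Step~3 goes slightly beyond the paper's one-line proof: the summed $\mathcal{R}_1$ is rational, hence single-valued, and is $O(\alpha)$ along $\beta=O(\alpha^2)$, so no type~2 edge can arise; this makes the heuristic and numerical remarks unnecessary.
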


\begin{remarks}
    \item Figure~\ref{fig:samoussa} shows that $\phi(\mathcal{C})$ accurately predicts the spectral boundary of the matrix $\vb{M}$.
\end{remarks}

\begin{proof}
Using the additivity of $\mathcal{R}_1$ and $\mathcal{R}_2$ (see \cite{bousseyroux1}), one can compute $\mathcal{R}_1$ and $\mathcal{R}_2$ of $\vb{M}$ using the expressions of the transforms of complex Wishart matrices given in Table~\ref{tab}, and then apply Result~\ref{thm:bordspectre}.
\end{proof}

\begin{figure}[h!]
    \centering
    \includegraphics[scale = 1]{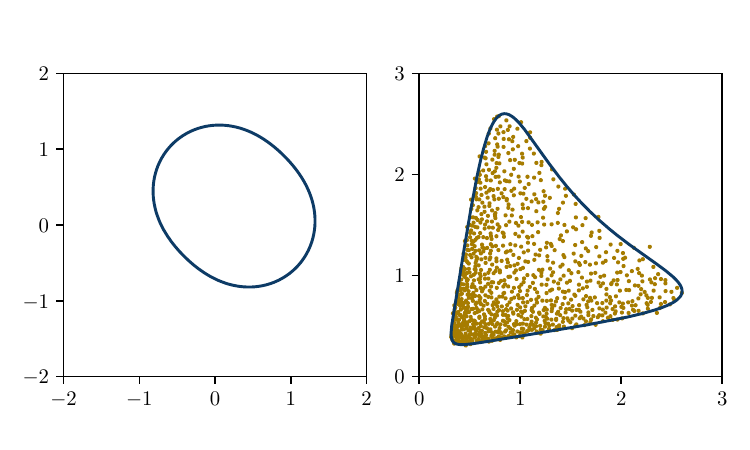}
    \caption{The right panel shows the eigenvalues from a single realization of the matrix $\vb{M}$ with $q = 0.5$ and $N = 1000$. We use the procedure described in Proposition~\ref{propsamoussa} to determine the spectral boundary. The left panel shows the curve $\mathcal{C}$ computed numerically, and $\phi(\mathcal{C})$ is drawn on the right as the predicted spectral boundary.}
    \label{fig:samoussa}
\end{figure}

\subsection{Eye of Sauron}\label{sub:sauron}

\begin{proposition}\label{soron}
Let $\vb{E}$ be a complex elliptic matrix with parameter $-1\leq \tau \leq 1$, and let $\vb{U}$ be unitary. For $\sigma \geq 0$, consider the matrix
\begin{equation}
    \vb{M} := \vb{U} + \sigma \vb{E}.
\end{equation}
The outer boundary of the limiting spectral distribution is an ellipse with semi-axes
\begin{equation}
    \frac{(1 \pm \tau)\sigma^2 + 1}{\sqrt{1 + \sigma^2}}.
\end{equation}
When $\sigma < 1$, the spectrum exhibits a hole in the form of a disk of radius $\sqrt{1 - \sigma^2}$.
\end{proposition}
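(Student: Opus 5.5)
The plan is to apply Result~\ref{thm:bordspectre} directly to $\vb{M} = \vb{U} + \sigma\vb{E}$. I will first compute the transforms of $\vb{M}$ using the additivity of $\widetilde{\mathcal{R}}_1$ and $\widetilde{\mathcal{R}}_2$ for independent rotationally invariant matrices~\cite{bousseyroux1}. Then I will evaluate the type~1 and type~2 boundary conditions separately.

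\textbf{Transforms of $\vb{M}$.} Scaling the definition~\eqref{def:H} shows that $\mathcal{H}_{\sigma\vb{E}}(\alpha,\beta) = \mathcal{H}_{\vb{E}}(\sigma\alpha,\sigma\beta)/\sigma^0$ up to the natural normalisation. Hence, from Table~\ref{tab}, $\widetilde{\mathcal{R}}_{1,\sigma\vb{E}} = \sigma^2\alpha$ and $\widetilde{\mathcal{R}}_{2,\sigma\vb{E}} = \tau\sigma^2\beta$. Adding the unitary entries of Table~\ref{tab} gives
\begin{equation*}
\widetilde{\mathcal{R}}_{1,\vb{M}}(\alpha,\beta) = \frac{-1\pm\sqrt{1+4\alpha^2}}{2\alpha} + \sigma^2\alpha,
\qquad
\widetilde{\mathcal{R}}_{2,\vb{M}}(\alpha,\beta) = \tau\sigma^2\beta .
\end{equation*}
We also have $\tau(\vb{M}) = 0$.

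\textbf{Type~1 boundary (outer ellipse).} Take the $+$ branch, which is regular at $\alpha=0$. Then $\partial_\alpha\mathcal{R}_{1,\vb{M}}(0,1/x) = 1+\sigma^2$, so condition~\eqref{boundary1} becomes $|x|^2 = 1+\sigma^2$. This describes a circle $x = \sqrt{1+\sigma^2}\,e^{i\theta}$. The map~\eqref{map} is $x\mapsto x + \tau\sigma^2/x$, a Joukowski-type map. Writing $\rho = \sqrt{1+\sigma^2}$, the image of the circle is
\begin{equation*}
\Bigl(\rho + \tfrac{\tau\sigma^2}{\rho}\Bigr)\cos\theta + i\Bigl(\rho - \tfrac{\tau\sigma^2}{\rho}\Bigr)\sin\theta .
\end{equation*}
This is an ellipse with semi-axes $\bigl((1\pm\tau)\sigma^2+1\bigr)/\sqrt{1+\sigma^2}$, as claimed.

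\textbf{Type~2 boundary (inner hole).} Take the $-$ branch and expand it at small $\alpha$:
\begin{equation*}
\frac{-1-\sqrt{1+4\alpha^2}}{2\alpha} = -\frac{1}{\alpha} - \alpha + O(\alpha^3).
\end{equation*}
Since $\mathcal{R}_{1,\vb{M}}$ does not depend on $\beta$, substituting $\beta = -z\alpha^2$ changes nothing. This gives $\mathcal{R}_{1,\vb{M}} = -1/\alpha - (1-\sigma^2)\alpha + o(\alpha)$. Comparing with~\eqref{boundary2} yields $|z|^2 = 1-\sigma^2$, which is a genuine circle only when $\sigma<1$. This produces the disk hole of radius $\sqrt{1-\sigma^2}$.

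\textbf{Main obstacle.} Result~\ref{thm:bordspectre} only guarantees that boundaries lie among these candidate curves. The delicate step is therefore justifying that the ellipse is the outer edge and the circle is a true inner edge. I would argue this by branch selection. The $+$ branch is the principal one, corresponding to the exterior region, where $\g_{\vb{M}}(z)\sim 1/z$. The $-$ branch corresponds, via Result~4 of~\cite{bousseyroux1}, to a charge-free region where $\g_{\vb{M}}=0$. This region must contain the origin and is non-empty exactly when $\sigma<1$. I would check consistency at the limits: as $\sigma\to 0$ we recover the unit circle, and at $\sigma=1$ the hole closes. I would then confirm the picture numerically.
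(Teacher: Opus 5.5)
Your proposal follows the paper's proof essentially step for step: additivity of the transforms, the regular $+$ branch giving the type~1 circle $|x|^2=1+\sigma^2$ pushed through the Joukowski map, and the $-$ branch giving the type~2 circle $|z|^2=1-\sigma^2$. Your $\widetilde{\mathcal{R}}_{2,\vb{M}}=\tau\sigma^2\beta$ (with the map $x\mapsto x+\tau\sigma^2/x$) is in fact the correct value and is what yields the stated semi-axes; the paper's displayed $\tau\beta$ and $x\mapsto x+\tau/x$ are a slip, since they would give $(1+\sigma^2\pm\tau)/\sqrt{1+\sigma^2}$.

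The one step to tighten is the existence of the hole. Your branch-selection remark asserts that the region where $\g_{\vb{M}}=0$ is non-empty for $\sigma<1$, but does not show it. The paper instead argues that for small $\sigma$ the spectrum is a perturbation of the unit circle, so an inner edge must exist and must be the type~2 circle; it then extends this to all $\sigma<1$ by continuity. You should promote your $\sigma\to 0$ check into that existence argument.
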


\begin{remarks}
    \item When $\sigma = 0$, we recover the unit circle, which is the support of the limiting spectral distribution of a unitary matrix.
    \item Figure~\ref{fig:soron} shows a numerical example.
\end{remarks}

\begin{proof}
Using Table~\ref{tab}, we obtain
\begin{equation}
    \widetilde{\mathcal{R}_{1, \vb{M}}}(\alpha, \beta) = \frac{-1 \pm \sqrt{1 + 4\alpha^2}}{2\alpha} + \sigma^2 \alpha
\end{equation}
and
\begin{equation}
    \widetilde{\mathcal{R}_{2, \vb{M}}}(\alpha, \beta) = \tau \beta.
\end{equation}
There is therefore only one determination to consider for $\mathcal{R}_2$. By contrast, for $\mathcal{R}_1$, taking the branch with the plus sign yields
\begin{equation}
    \partial_{\alpha} \mathcal{R}_1(0, \beta) = 1 + \sigma^2,
\end{equation}
whereas taking the minus sign in front of the square root gives
\begin{equation}
    \mathcal{R}_1(\alpha, -z \alpha^2) \underset{\alpha \to 0}{=} \frac{-1}{\alpha} - (1 - \sigma^2)\alpha + o(\alpha).
\end{equation}

By Result~\ref{thm:bordspectre}, the outer spectral boundary of $\vb{M}$ is given by the image of the circle of radius $\sqrt{1 + \sigma^2}$ under the map $x \mapsto x + \tau/x$, which yields the desired ellipse. If there exists an inner radius, then by Result~\ref{thm:bordspectre}, a type~2 boundary must satisfy $|z|^2 = 1 - \sigma^2$. For small $\sigma$, such an inner boundary must exist since we are perturbing the unit circle spectrum of $\vb{U}$. Hence, for small $\sigma$, there exists a hole in the form of a disk of radius $\sqrt{1 - \sigma^2}$. By continuity, this formula holds up to $\sigma = 1$.
\end{proof}

\begin{figure}[h!]
    \centering
    \includegraphics[scale = 1]{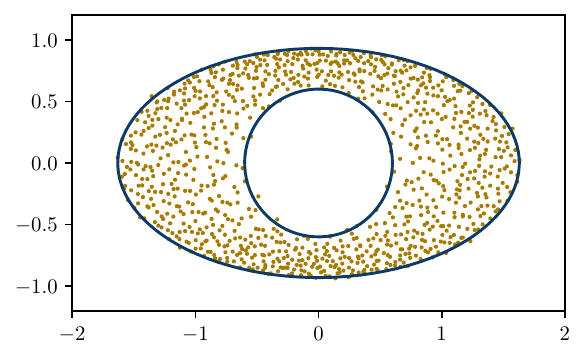}
    \caption{Empirical eigenvalues of $\vb{U} + \sigma \vb{E}$, where $\vb{U}$ is a random unitary matrix, $\sigma = 0.8$, and $\vb{E}$ is a complex elliptic matrix with parameter $\tau = 0.8$. The boundaries are given by Proposition~\ref{soron}.}
    \label{fig:soron}
\end{figure}
\subsection{Two rings triangularly deformed}\label{sub:rings}
We conclude with a more involved example. We consider the matrix $\vb{C}_{r_1,r_2}$ defined in Definition~\ref{def:two_ring}. We define
\begin{equation}\label{defM}
    \vb{M} := \vb{C}_{r_1,r_2} + \vb{D},
\end{equation}
where $\vb{D}$ is a diagonal matrix in which one third of the coefficients are equal to $1$, one third to $e^{2i\pi/3}$, and one third to $e^{4i\pi/3}$.

A priori, in order to apply Result~\ref{main_result} by viewing $\vb{M}$ as a deformation of $\vb{D}$ by $\vb{C}_{r_1,r_2}$, one would need to know all the possible determinations of the transforms associated with $\vb{C}_{r_1,r_2}$. Unfortunately, we were not able to compute all of them explicitly. Table~\ref{tab} nevertheless provides several such determinations. We conjecture that these are sufficient to recover all the spectral boundaries of $\vb{M}$.
\begin{conjecture}\label{conjecture}
Applying the \emph{type~1 boundary procedure} of Result~\ref{thm:bordspectre}
to the following pairs
$(\partial_{\alpha}\mathcal{R}_1(0,\beta),\,\mathcal{R}_2(0,\beta))$:

\begin{itemize}

\item
\begin{equation}\label{couple1}
\left(
\frac{r^2 - r_1^2 r_2^2 |\beta|^2}{1 - r^2 |\beta|^2},
\; 0
\right),
\end{equation}
where the parameter $r$ is defined by $r^2 = \frac{r_1^2 + r_2^2}{2}$,
yields the outer spectral boundary.

\item
\begin{equation}\label{couple2}
\left(
\frac{(4|\beta|^2 r_{\min}^2 - 1)(4|\beta|^2 r_{\max}^2 - 1)}
{8|\beta|^4 (r_{\max}^2 - r_{\min}^2)} - 1,
\; -\frac{1}{2\beta}
\right),
\end{equation}
where $r_{\max}=\max(r_1,r_2)$ and $r_{\min}=\min(r_1,r_2)$,
generates an additional spectral boundary, which may be empty if
Eq.~\eqref{sol1} admits no solution.

\item
When $\dfrac{1}{r_1^2}+\dfrac{1}{r_2^2}\leq 2$, there exists an inner spectral boundary
given by the circle centered at the origin with radius
\begin{equation}
\sqrt{\frac{r_1^2+r_2^2-1-\sqrt{(r_1^2-r_2^2)^2+1}}{2}}.
\end{equation}

\end{itemize}

The conjecture is that these constructions generate all spectral boundaries of $\vb{M}$.
\end{conjecture}

\begin{remarks}
\item Figure~\ref{fig:placeholder} illustrates the conjecture for several values of $(r_1,r_2)$.
\end{remarks}

\begin{proof}
This follows from an application of Result~\ref{main_result}, using the explicit
expressions of $\mathcal{R}_1$ and $\mathcal{R}_2$ associated with
$\vb{C}_{r_1,r_2}$ given in Table~\ref{tab}. We also use the formulas for
$\g_{\vb{D}}(z)$ and $h_{\vb{D}}(z)$:
\begin{equation}
\g_{\vb{D}}(z)=\frac{1}{3}\left(
\frac{1}{z-1}+\frac{1}{z-\alpha}+\frac{1}{z-\overline{\alpha}}
\right),
\end{equation}
and
\begin{equation}
h_{\vb{D}}(z)=\frac{1}{3}\left(
\frac{1}{|z-1|^2}+\frac{1}{|z-\alpha|^2}+\frac{1}{|z-\overline{\alpha}|^2}
\right),
\end{equation}
where $\alpha=e^{2i\pi/3}$.

To determine the possible type~2 spectral boundaries, Eq.~\eqref{edge2}
of Result~\ref{main_result} shows that they are given by the points
$z\in\mathbb{C}$ such that
\begin{equation}
\frac{1}{r_1^2-|z|^2}+\frac{1}{r_2^2-|z|^2}=2.
\end{equation}
Solving this equation yields
\begin{equation}
|z|=
\sqrt{\frac{r_1^2+r_2^2+1-\sqrt{(r_1^2-r_2^2)^2+1}}{2}},
\end{equation}
which is meaningful only when
\begin{equation}
\frac{1}{r_1^2}+\frac{1}{r_2^2}\leq 2.
\end{equation}
\end{proof}

\begin{figure}
    \centering
\includegraphics[width=\textwidth]{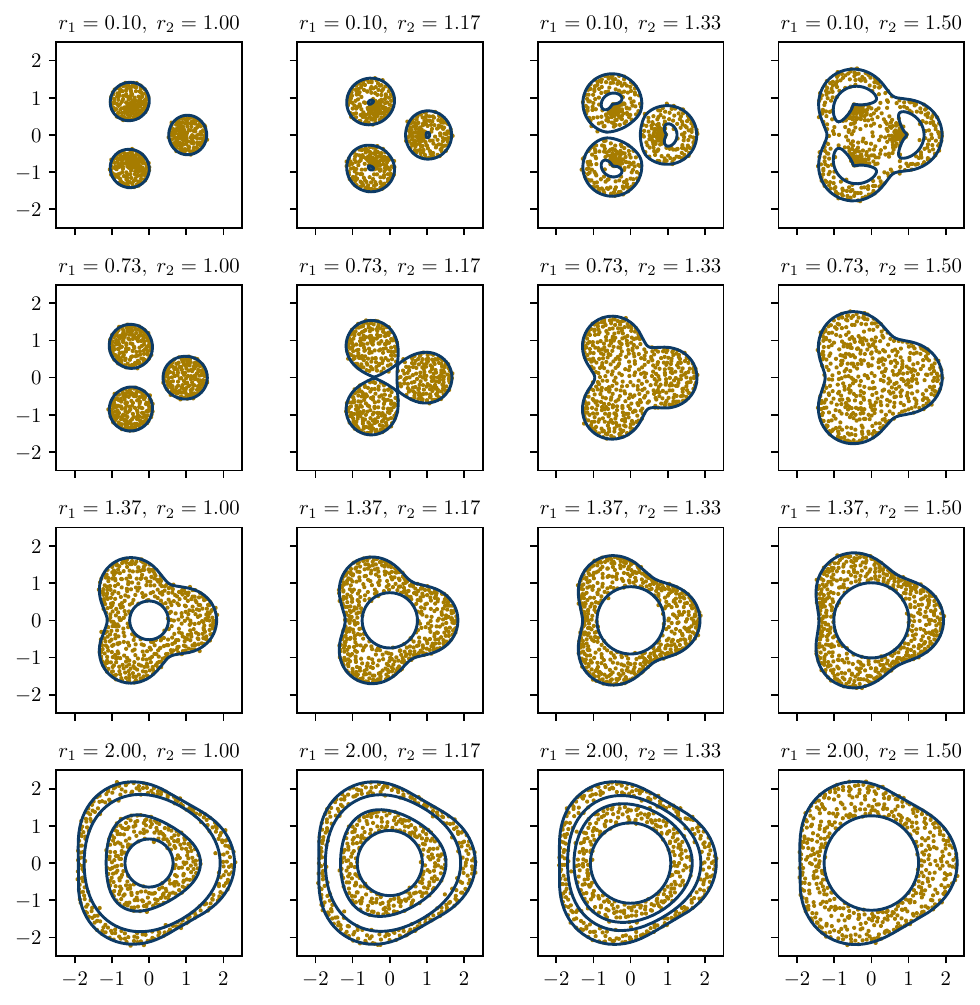}
    \caption{Complex eigenvalues of the matrix $\vb{M}$ defined in Eq.\eqref{defM} ($N=500$) for several parameter pairs $(r_1,r_2)$. The theoretical spectral boundaries predicted by Conjecture~\ref{conjecture} are also shown.}
    \label{fig:placeholder}
\end{figure}

\section*{Acknowledgements} 

We are grateful to Florent Benaych-Georges and Jean-Philippe Bouchaud for their valuable insights. This research was conducted within the Econophysics \& Complex Systems Research Chair, under the aegis of the Fondation du Risque, the Fondation de l’Ecole polytechnique, the Ecole polytechnique, and Capital Fund Management.

\bibliographystyle{plain}
\bibliography{References.bib}

@article{sarapin2025limiting,
  title={{Limiting Eigenvalue Distribution of the General Deformed Ginibre Ensemble}},
  author={Sarapin, Roman},
  journal={Journal of Statistical Physics},
  volume={192},
  number={8},
  pages={114},
  year={2025},
  publisher={Springer}
}

@article{fyodorov2025kac,
  title={{Kac-Rice inspired approach to non-Hermitian random matrices}},
  author={Fyodorov, Yan V},
  journal={arXiv preprint arXiv:2506.21058},
  year={2025}
}

@article{bousseyroux1,
  title={R-transforms for non-Hermitian matrices: a spherical integral approach},
  author={Bousseyroux, Pierre and Potters, Marc},
  journal={arXiv preprint arXiv:2601.09360},
  year={2026}
}

@article{bousseyroux2,
  title={The eigenvalues and eigenvectors of finite-rank normal perturbations of large rotationally invariant non-Hermitian matrices},
  author={Bousseyroux, Pierre and Potters, Marc},
  journal={arXiv preprint arXiv:2601.10427},
  year={2026}
}

@article{sommers1988spectrum,
  title={{Spectrum of large random asymmetric matrices}},
  author={Sommers, Hans Juergen and Crisanti, Andrea and Sompolinsky, Haim and Stein, Yaakov},
  journal={Physical review letters},
  volume={60},
  number={19},
  pages={1895},
  year={1988},
  publisher={APS}
}

@article{FeinbergZeeNPB501,
  author  = {Feinberg, J. and Zee, A.},
  title   = {{Non-Gaussian non-Hermitian random matrix theory: Phase transition and addition formalism}},
  journal = {Nucl. Phys. B},
  year    = {1997},
  volume  = {501},
  pages   = {643--669}
}

@article{FeinbergZeeNPB504,
  author  = {Feinberg, J. and Zee, A.},
  title   = {{Non-Hermitian random matrix theory: Method of hermitization}},
  journal = {Nucl. Phys. B},
  year    = {1997},
  volume  = {504},
  pages   = {579--608}
}

@article{ChalkerWangPRL1997,
  author  = {Chalker, J. T. and Wang, Z. Jane},
  title   = {{Diffusion in a random velocity field: spectral properties}},
  journal = {Phys. Rev. Lett.},
  year    = {1997},
  volume  = {79},
  pages   = {1797--1800}
}

@article{janik1997non,
  title={{Non-hermitian random matrix models}},
  author={Janik, Romuald A and Nowak, Maciej A and Papp, Gabor and Zahed, Ismail},
  journal={Nuclear Physics B},
  volume={501},
  number={3},
  pages={603--642},
  year={1997},
  publisher={Elsevier}
}

@article{JanikEtAlPRE1997,
  author  = {Janik, R. A. and Nowak, M. A. and Papp, G. and Wambach, J. and Zahed, I.},
  title   = {{Aspect of non-Hermitian random matrix models}},
  journal = {Phys. Rev. E},
  year    = {1997},
  volume  = {55},
  pages   = {4100--4107}
}

@article{Rodgers2010,
  author  = {Rodgers, G. J.},
  title   = {{Non-Hermitian random matrices and Brown measures}},
  journal = {J. Math. Phys.},
  year    = {2010},
  volume  = {51},
  pages   = {093304}
}

@article{liu2022phase,
  title={Phase transition of eigenvalues in deformed Ginibre ensembles},
  author={Liu, Dang-Zheng and Zhang, Lu},
  journal={arXiv preprint arXiv:2204.13171},
  year={2022}
}

@book{potters2020first,
  title={{A First Course in Random Matrix Theory: For Physicists, Engineers and Data Scientists}},
  author = {M.Potters, J.-P. Bouchaud},
  year={2020},
  publisher={Cambridge University Press}
}

@article{guionnet2011single,
  title={{The single ring theorem}},
  author={Guionnet, Alice and Krishnapur, Manjunath and Zeitouni, Ofer},
  journal={Annals of mathematics},
  pages={1189--1217},
  year={2011},
  publisher={JSTOR}
}

@misc{brown1986lidskii,
  title={{Lidskii’s theorem in the type II case, Geometric methods in operator algebras (Kyoto 1983), H. Araki and E. Effros (Eds.) Pitman Res. notes in Math. Ser 123}},
  author={Brown, LG},
  year={1986},
  publisher={Longman Sci. Tech}
}

@article{BianeLehner2001,
  author    = {Biane, Philippe and Lehner, Franz},
  title     = {{Computation of some examples of Brown's spectral measure in free probability}},
  journal   = {Colloquium Mathematicae},
  volume    = {90},
  number    = {2},
  pages     = {181--211},
  year      = {2001}
}

@article{haagerup2000brown,
  title={{Brown's spectral distribution measure for R-diagonal elements in finite von Neumann algebras}},
  author={Haagerup, Uffe and Larsen, Flemming},
  journal={Journal of Functional Analysis},
  volume={176},
  number={2},
  pages={331--367},
  year={2000},
  publisher={Elsevier}
}

@article{voiculescu1986addition,
  title={{Addition of certain non-commuting random variables}},
  author={Voiculescu, Dan},
  journal={Journal of functional analysis},
  volume={66},
  number={3},
  pages={323--346},
  year={1986},
  publisher={Elsevier}
}

@article{girko1985circular,
  title={{Circular law}},
  author={Girko, Vyacheslav L},
  journal={Theory of Probability \& Its Applications},
  volume={29},
  number={4},
  pages={694--706},
  year={1985},
  publisher={SIAM}
}

@article{girko1986elliptic,
  title={{Elliptic law}},
  author={Girko, VL},
  journal={Theory of Probability \& Its Applications},
  volume={30},
  number={4},
  pages={677--690},
  year={1986},
  publisher={SIAM}
}

@article{bercovici2022brown,
  title={{The Brown measure of a sum of two free random variables, one of which is R-diagonal}},
  author={Bercovici, Hari and Zhong, Ping},
  journal={arXiv preprint arXiv:2209.12379},
  year={2022}
}

@article{HallHo2022ImagSemicirc,
  author  = {Brian C. Hall and Ching-Wei Ho},
  title   = {{The Brown measure of the sum of a self-adjoint element and an imaginary multiple of a semicircular element}},
  journal = {Letters in Mathematical Physics},
  year    = {2022},
  volume  = {112},
  articleno = {19},
  note    = {Paper No.~19, 61 pp.},
  doi     = {10.1007/s11005-022-01516-3}
}

@article{Ho2022SelfAdjointPlusElliptic,
  author  = {Ching-Wei Ho},
  title   = {{The Brown measure of the sum of a self-adjoint element and an elliptic element}},
  journal = {Electronic Journal of Probability},
  year    = {2022},
  volume  = {27},
  pages   = {1--32},
  doi     = {10.1214/22-EJP840}
}

@article{HoYinZhong2025OutliersFullRankSingleRing,
  author       = {Ching-Wei Ho and Zhi Yin and Ping Zhong},
  title        = {{Outlier eigenvalues for full rank deformed single ring random matrices}},
  journal      = {arXiv preprint},
  eprint       = {2502.10796},
  archivePrefix= {arXiv},
  primaryClass = {math.PR},
  doi          = {10.48550/arXiv.2502.10796},
  year         = {2025}
}

@article{HoZhong2025DeformedSingleRing,
  author  = {Ching-Wei Ho and Ping Zhong},
  title   = {{Deformed single ring theorems}},
  journal = {Journal of Functional Analysis},
  year    = {2025},
  volume  = {288},
  number  = {5},
  pages   = {110797},
  doi     = {10.1016/j.jfa.2024.110797},
  note    = {Issue date: 1 Mar 2025}
}

@article{Zhong2021EllipticPlusFree,
  author       = {Ping Zhong},
  title        = {{Brown measure of the sum of an elliptic operator and a free random variable in a finite von Neumann algebra}},
  journal      = {arXiv preprint},
  eprint       = {2108.09844},
  archivePrefix= {arXiv},
  primaryClass = {math.OA},
  doi          = {10.48550/arXiv.2108.09844},
  year         = {2021},
  note         = {v5 (Aug 2025); to appear in American Journal of Mathematics}
}

@article{marchenko1967distribution,
  title={{Distribution of eigenvalues for some sets of random matrices}},
  author={Marchenko, Vladimir Alexandrovich and Pastur, Leonid Andreevich},
  journal={Matematicheskii Sbornik},
  volume={114},
  number={4},
  pages={507--536},
  year={1967},
  publisher={Russian Academy of Sciences, Steklov Mathematical Institute of Russian~…}
}

@article{haagerup2007brown,
  title={{Brown measures of unbounded operators affiliated with a finite von Neumann algebra}},
  author={Haagerup, Uffe and Schultz, Hanne},
  journal={Mathematica Scandinavica},
  pages={209--263},
  year={2007},
  publisher={JSTOR}
}

@article{wishart1928generalised,
  title={The generalised product moment distribution in samples from a normal multivariate population},
  author={Wishart, John},
  journal={Biometrika},
  volume={20},
  number={1/2},
  pages={32--52},
  year={1928},
  publisher={JSTOR}
}

@book{mehta2004random,
  title={Random matrices},
  author={Mehta, Madan Lal},
  volume={142},
  year={2004},
  publisher={Elsevier}
}

\appendix
\newpage
\section{APPENDICES}

\subsection{Proof of the first result} \label{first_result}

Once again, we strongly encourage the reader to consult~\cite{bousseyroux1}, where the notions we will use here are developed in detail.  
Let $\vb{M}$ be a large random matrix. We will use the following notations:  

\begin{equation}
    \g_{1, \vb{M}}(\omega, z) = \tau\!\Bigl[\omega\bigl(\omega^2 \vb{1}-(z\vb{1}-\vb{M})(z\vb{1}-\vb{M})^*\bigr)^{-1}\Bigr],
\end{equation}
and
\begin{equation}
    \g_{2, \vb{M}}(\omega, z) = -\tau\!\Bigl[(z\vb{1}-\vb{M})^*\bigl(\omega^2\vb{1}-(z\vb{1}-\vb{M})(z\vb{1}-\vb{M})^*\bigr)^{-1}\Bigr],
\end{equation}
as well as
\begin{equation}
    \mathcal{G}_{\vb{M}}(\omega,z) \;=\; 
    \begin{pmatrix}
        \g_{1, \vb{M}}(\omega, z) & \g_{2, \vb{M}}(\omega, z) \\
        \overline{\g_{2, \vb{M}}(\omega, z)} &\g_{1, \vb{M}}(\omega, z)
    \end{pmatrix}.
\end{equation}
As recalled in~\cite{bousseyroux1}, by taking $\omega = -i\epsilon$ with $\epsilon>0$ and $\epsilon \to 0$, we have $\g_{1, \vb{M}}(\omega, z) \to i\oo_{\vb{M}}(z)$, where $\oo_{\vb{M}}(z)$, defined in~\cite{bousseyroux1}, measures the non-normality of the eigenvectors. We will not use this quantity explicitly, except for the fact that $\oo_{\vb{M}}(z) \neq 0$ if and only if $\rho_{\vb{M}}(z) \neq 0$, where $\rho_{\vb{M}}(z)$ is the limiting spectral distribution of $\vb{M}$ evaluated at the point $z$. The idea is to look at the main result of~\cite{bousseyroux1}, which we restate here:

\begin{result}\label{main_result}
Let $\vb{A}$ be a large deterministic matrix and $\vb{B}$ a rotationally invariant random matrix. Then, for large $\omega$ and $z\in\C$, one expects
\begin{equation}\label{eq_result}
    \mathcal{G}_{\vb{A} + \vb{B}}(\omega, z)
    \;=\; \mathcal{G}_{\vb{A}}\!\left(\omega - \mathcal{R}_{1, \vb{B}}(\g_1,\g_2),\, z - \mathcal{R}_{2, \vb{B}}(\g_1,\g_2)\right)
\end{equation}where $\g_1 = \g_{1, \vb{A} + \vb{B}}$ and $\g_2 = \g_{2, \vb{A} + \vb{B}}$. 
This identity can be analytically continued to all $(\omega,z)$, although one must carefully choose the appropriate branches of $\mathcal{R}_1$ and $\mathcal{R}_2$.
\end{result}
We now aim to prove Result~\ref{main_result} by relying on the result.  
We denote $\vb{M} = \vb{A} + \vb{B}$.  
Let $z \in \supp(\rho_{\vb{M}})$, and consider the limit $z \to z'$, where $z'$ lies on the boundary of the spectral distribution of $\vb{M}$.  In this case, we have $\oo_{\vb{M}}(z) \underset{z \to z'}{\longrightarrow} 0$.  Taking the limit $\omega \to 0$ in~\eqref{eq_result} yields:
\begin{equation}\label{1step}
    \mathcal{G}_{\vb{M}}(0, z)
    \;=\;
    \mathcal{G}_{\vb{A}}\!\left(
        -\,\mathcal{R}_{1, \vb{B}}\big(i\oo_{\vb{M}}(z), \g_{\vb{M}}(z)\big),
        \; z - \mathcal{R}_{2, \vb{B}}\big(i\oo_{\vb{M}}(z), \g_{\vb{M}}(z)\big)
    \right).
\end{equation}
We now consider two cases.

\paragraph{First case.}  
We assume here that $\partial_{\alpha} \mathcal{R}_{1, \vb{B}}(0,\g_{\vb{M}}(z'))$ exists. 
 Then using \eqref{1step}, we get
\begin{equation}\label{eq45}
    \mathcal{G}_{\vb{M}}(0, z) 
    \;\underset{z\to z'}{=}\; \mathcal{G}_{\vb{A}}\!\left( - \partial_{\alpha} \mathcal{R}_{1, \vb{B}}(0,\g_{\vb{M}}(z'))\, i\oo_{\vb{M}}(z),\, z' - \mathcal{R}_{2, \vb{B}}(0,\g_{\vb{M}}(z'))\right).
\end{equation}
Looking only at the upper-left coefficient, we obtain:
\begin{equation}
    i\oo(z)
    \;\underset{z\to z'}{=}\; \g_1\!\left( - \partial_{\alpha} \mathcal{R}_{1, \vb{B}}(0,\g_{\vb{M}}(z'))\, i\oo_{\vb{M}}(z),\, z - \mathcal{R}_{2, \vb{B}}(0,\g_{\vb{M}}(z))\right),
\end{equation}
and therefore necessarily:
\begin{equation}
    i\oo_{\vb{M}}(z) \underset{z\to z'}{\sim} h_{\vb{A}}(z - \mathcal{R}_{2, \vb{B}}(0,\g_{\vb{M}}(z)))\, \partial_{\alpha} \mathcal{R}_{1, \vb{B}}(0,\g_{\vb{M}}(z'))\, i\oo_{\vb{M}}(z).
\end{equation}
Finally,
\begin{equation}\label{bord1}
    1 = h_{\vb{A}}(z' - \mathcal{R}_{2, \vb{B}}(0,\g_{\vb{M}}(z')))\, \partial_{\alpha} \mathcal{R}_{1, \vb{B}}(0,\g_{\vb{M}}(z')).
\end{equation}
Now, looking at the upper-right coefficient of equation~\eqref{eq45}, we obtain
\begin{equation}
    \g_{\vb{M}}(z') = \g_{\vb{A}}(z' - \mathcal{R}_{2, \vb{B}}(0,\g_{\vb{M}}(z'))).
\end{equation}
Setting $x = z' - \mathcal{R}_{2, \vb{B}}(0,\g_{\vb{M}}(z'))$ and substituting into equation~\eqref{bord1}, we obtain
\begin{equation}
    \partial_{\alpha} \mathcal{R}_{1, \vb{B}}(0,\g_{\vb{A}}(x)) = \frac{1}{h_{\vb{A}}(x)}.
\end{equation}
To return to the variable $z'$, we simply note that
\begin{equation}
    z' = x + \mathcal{R}_{2, \vb{B}}(0, \g_{\vb{M}}(z')) = x + \mathcal{R}_{2, \vb{B}}(0, \g_{\vb{A}}(x)),
\end{equation}
and we thus recover the first part of the result.

\paragraph{Second case.}

Consider now the case where $|\mathcal{R}_{1, \vb{B}}(0,\g_{\vb{M}}(z'))| \to +\infty$. From equation~\eqref{1step}, this forces us to expand $\g_{1, \vb{A}}(\omega, z)$ as $\omega \to +\infty$ and to extract the next term of order $\frac{1}{\omega^3}$.  
One then realizes that what ultimately matters are only the quantities $\tau(\vb{A})$ and $\tau(\vb{A}\vb{A}^*)$.  
Thus, to determine the spectral boundary, it suffices to replace $\vb{A}$ by, for instance, a matrix of the form $\tau(\vb{A})\vb{1} + \sqrt{\tau(\vb{A}\vb{A}^*) - |\tau(\vb{A})|^2}\vb{G}$, where $\vb{G}$ is a Ginibre matrix defined in Definition~\ref{def:ginibre}. This matrix is bi-invariant and satisfies $r_{+, \vb{G}} = 1$ and $r_{-, \vb{G}} = 0$ according to the circular law~\cite{girko1985circular}. In this case, the result of Result~1 applies directly — more precisely, through the translation by $\tau(\vb{A})$ from Eq.~\eqref{bi-invariant-case}, using the $\mathcal{R}_1$ and $\mathcal{R}_2$ transforms of $\vb{G}$ computed in Table~\ref{tab}.

\subsection{Proof of the second result}\label{second_result}
Let $\vb{B}$ be a large random matrix, a priori non-Hermitian.  
We use the following equations from~\cite{bousseyroux1}:
\begin{equation}\label{main1}
    i\,\oo(z)
    \;=\;
    \frac{-\,\mathcal{R}_{1, \vb{B}}(i\,\oo(z), \g(z))}
    {\mathcal{R}_{1, \vb{B}}(i\,\oo(z), \g(z))^2
    - \left|z - \mathcal{R}_{2, \vb{B}}(i\,\oo(z), \g(z))\right|^2},
\end{equation}
\begin{equation}\label{main2}
    \overline{\g(z)}
    \;=\;
    \frac{\mathcal{R}_{2, \vb{B}}(i\,\oo(z), \g(z)) - z}
    {\mathcal{R}_{1, \vb{B}}(i\,\oo(z), \g(z))^2
    - \left|z - \mathcal{R}_{2, \vb{B}}(i\,\oo(z), \g(z))\right|^2}.
\end{equation}
 Here $z \in \C$, $\mathcal{R}_{1,\vb{B}} \in \widetilde{\mathcal{R}}_{1,\vb{B}}$, $\mathcal{R}_{2,\vb{B}} \in \widetilde{\mathcal{R}}_{2,\vb{B}}$, and $\g(z) = \tau\!\bigl[(z\vb{1} - \vb{B})^{-1}\bigr]$. The functions $\oo(z)$ satisfy the property  $\oo(z) \neq 0 \;\Longleftrightarrow\; \rho(z) \neq 0$ where $\rho$ denotes the limiting spectral distribution of $\vb{B}$.

Let $z'$ be a point on the spectral boundary of $\vb{B}$. The idea is again to study the limit of the previous equations as $z \to z'$, knowing that $\oo(z) \to 0$ in this limit. The left-hand side of~\eqref{main1} tends to zero, so there are two possible scenarios:

\begin{equation}
    \mathcal{R}_{1, \vb{B}}(i\,\oo(z), \g(z)) \;\to\; 0,
\end{equation}
or
\begin{equation}
    \bigl|
    \mathcal{R}_{1, \vb{B}}(i\,\oo(z), \g(z))^2
    - \bigl|z - \mathcal{R}_{2, \vb{B}}(i\,\oo(z), \g(z))\bigr|^2
    \bigr|
    \;\to\; +\infty.
\end{equation}

\paragraph{First case.} We first consider the former case. We then have
\begin{equation}
    i\,\oo(z)
    \underset{z\to z'}{\sim}
    \frac{\mathcal{R}_{1, \vb{B}}(i\,\oo(z), \g(z))}
    {\bigl|z - \mathcal{R}_{2, \vb{B}}(i\,\oo(z), \g(z))\bigr|^2},
\end{equation}
which implies that
\begin{equation}
    \bigl|z' - \mathcal{R}_{2, \vb{B}}(0, \g(z'))\bigr|^2
    = \partial_{\alpha} \mathcal{R}_{1, \vb{B}}(0, \g(z')).
\end{equation}
The second equation then gives
\begin{equation}
    \g(z') = \frac{1}{z' - \mathcal{R}_2(0, \g(z'))}.
\end{equation}
This yields the first part of the result.

\bigskip

\paragraph{Second case.} Let's assume that $\tau(\vb{B}) = 0$. Let us now turn to the second case, where
\begin{equation}
    \bigl|
    \mathcal{R}_{1, \vb{B}}(i\,\oo(z), \g(z))^2
    - \bigl|z - \mathcal{R}_{2, \vb{B}}(i\,\oo(z), \g(z))\bigr|^2
    \bigr|
    \;\to\; +\infty.
\end{equation}
In this case, $|\mathcal{R}_{1, \vb{B}}(i\,\oo(z), \g(z))| \to +\infty$.  
Using~\eqref{main1}, we deduce that
\begin{equation}\label{eq3}
    \mathcal{R}_{1, \vb{B}}(i\,\oo(z), \g(z))
    \underset{z\to z'}{\approx}
    \frac{-1}{i\,\oo(z)}
    - \bigl|z - \mathcal{R}_{2, \vb{B}}(i\,\oo(z), \g(z))\bigr|^2 i\,\oo(z).
\end{equation}
The second equation~\eqref{main2} in turn gives
\begin{equation}
    \g(z) \underset{z\to z'}{\sim} -\,z'\,\bigl(i\,\oo(z)\bigr)^2.
\end{equation}
Substituting this into~\eqref{eq3}, we find
\begin{equation}
    \mathcal{R}_{1, \vb{B}}\!\bigl(i\,\oo(z), -z'\,(i\,\oo(z))^2\bigr)
    \underset{z\to z'}{\approx}
    \frac{-1}{i\,\oo(z)} - |z'|^2 i\,\oo(z),
\end{equation}
which establishes the second part of the result.

\end{document}